\theoremstyle{plain}
\newtheorem{definition}{Definition}[section]
\newtheorem{theorem}[definition]{Theorem}
\newtheorem*{theorem*}{Theorem}
\newtheorem{assumption}[definition]{Assumption}
\newtheorem*{remark*}{Remark}
\newtheorem*{sideremark*}{Side Remark}\newtheorem*{mt*}{Main Theorem}
\newtheorem*{claim*}{Claim}
\newtheorem*{q*}{Question}
\newtheorem*{corollary*}{Corollary}
\newtheorem*{proposition*}{Proposition}
\newcommand{\R}{\mathbb{R}}
\newcommand{\na}{\nabla}
\newcommand{\p}{\partial}
\newcommand{\e}{\epsilon}
\newcommand{\map}{\rightarrow}
\newcommand{\two}{{\rm II}}
\newcommand{\OO}{{\mathcal{O}}}
\newcommand{\rt}{{\mathcal{R}_\theta}}
\newcommand{\ta}{{\widetilde{\aaa}}}
\def\XXint#1#2#3{{\setbox0=\hbox{$#1{#2#3}{\int}$ }
\vcenter{\hbox{$#2#3$ }}\kern-.6\wd0}}
\newcommand{\FF}{{\bf F}}
\newcommand{\TT}{{\bf T}}
\newcommand{\WW}{{\bf W}}
\newcommand{\aaa}{{\bf \alpha}}
\newcommand{\gl}{{\mathfrak{gl}(3;\R)}}
\newcommand{\curl}{{{\rm curl}\,}}\newcommand{\dv}{{{\rm div}\,}}
\newcommand{\n}{{\bf n}}
\newcommand{\proj}{{\mathscr{P}}}
\title{A Remark on stress of a spatially uniform dislocation density field}
\author{Siran Li}
\address{Siran Li: Department of Mathematics, Rice University, MS 136
P.O. Box 1892, Houston, Texas, 77251, USA.}
\email{\texttt{Siran.Li@rice.edu}}
\keywords{Nonlinear elasticity; Stress; Dislocation; Uniform dislocation density; Load; Elastic Body; Non-existence}
\subjclass[2010]{74B20; 74G25}
\date{\today}
\begin{document}

\maketitle

%\begin{abstract}
%We prove that 

%\end{abstract}

\section{Introduction}
\subsection{}
In an interesting recent paper \cite{a}, Acharya proved that the stress produced by a spatially uniform dislocation density field in a body comprising a nonlinear elastic material may fail to vanish under no loads. The class of  counterexamples constructed in \cite{a} is essentially $2$-dimensional: it works with the subgroup $\mathcal{S}\OO(2) \oplus \langle{\bf Id}\rangle \subset \OO(3)$.  The objective of this note is to extend Acharya's result in \cite{a} to the $\OO(3)$, subject to one additional structural condition and less regularity assumptions.  

\subsection{Nomenclature}\label{notations} Throughout $\Omega \subset \R^3$ is a simply-connected bounded domain with outward unit normal vectorfield $\n$. The group of $3\times 3$  orthogonal matrices is denoted by $\OO(3)$; {\it i.e.}, $M \in \OO(3)$ if and only if $M^\top=M^{-1}$.  The special orthogonal group $\mathcal{S}\OO(2)$ consists of the matrices in $\OO(2)$ with determinant $1$. The matrix field $\FF: \Omega \map \gl$ designates the elastic distortion, and $\WW := \FF^{-1}$ whenever $\FF$ is invertible. $\TT: \gl \map  \OO(3)$ denotes a generally nonlinear, frame-indifferent stress response function, where $\gl$ is the space of $3\times 3$ matrices. The composition $\TT(\FF)$ is the symmetric Cauchy stress field applied to the configuration of body $\Omega$. The constant matrix $\aaa \in \gl$ denotes the dislocation density distribution specified on $\Omega$.

For a matrix field $M = \{M^i_j\}_{1 \leq i,j \leq m}:\Omega \map \gl$, its curl and divergence are understood in the \emph{row-wise} sense. In local coordinates it means the following: for each $i,j,k,\ell \in \{1,2,3\}$, $\curl M$ is the $2$-tensor field
\begin{align*}
\big[\curl M\big]^i_j := \na_k M^i_\ell - \na_\ell M^i_k
\end{align*}
where $(k,\ell,j)$ is an even permutation of $(1,2,3)$, and $\dv M$ is the vectorfield
\begin{equation*}
\big[\dv M\big]^i = \sum_j \na_j M^i_j. 
\end{equation*}

Moreover, recall the \emph{Leray projector} is the $L^2$-orthogonal projection $\proj:L^2(\R^3;\R^3) \map L^2(\R^3;\R^3)$ that sends a vectorfield in $\R^3$ onto its divergence-free part. On $\R^3$ it can be defined via Fourier transform:
\begin{equation*}
\widehat{\proj v}(\xi) := \bigg( {\bf Id} - \frac{\xi \otimes \xi}{|\xi|^2} \bigg)\hat{v}(\xi).
\end{equation*}
The Leray projector plays an important r\^{o}le in the mathematical analysis of incompressible Navier--Stokes equations; {\it cf. e.g.} Constantin--Foias \cite{cf} and Temam \cite{t}. For a matrix field $M$, $\proj(M)$ is again understood in the row-wise sense. We denote by $$\mathscr{Q}:={\bf Id}-\proj$$ the complementary projection of $\proj$.

\subsection{Differential Equations}
In the above setting, the governing equations for the internal stress field in the body subject to the Cauchy stress field $\TT(\FF)$ was derived by Willis in \cite{w}. See also Eq.~(3) in \cite{a}:
\begin{equation}\label{eq}
\begin{cases}
\curl \WW = - \aaa \qquad \text{ in } \Omega,\\
\dv\big(\TT(\FF)\big) = 0 \qquad \text{ in } \Omega,\\
\TT(\FF) \cdot \n = 0\qquad \text{ on } \p\Omega.
\end{cases}
\end{equation}
Here $\aaa$ is a prescribed constant matrix. This PDE system is considered under the following
\begin{assumption}\label{assumption}
$\TT(\FF)={\bf 0}$ if and only if $\FF$ takes values in $\OO(3)$.
\end{assumption}
Acharya proved in \cite{a} the following result: %Eq.~\eqref{eq} admits no $(\OO(2) \oplus \langle{\bf Id}\rangle)$-valued twice continuously differentiable solutions, unless $\aaa$ is trivial:
\begin{theorem}\label{thm: a}
Let $\Omega$, $\WW$, $\TT$, $\FF$, and $\n$ be as in Section~\ref{notations} above. Let $\alpha$ be any nonzero constant matrix. Then, under Assumption~\ref{assumption}, there does not exist $\theta \in C^2(\Omega;\R)$ such that $\WW=\rt$ is a solution for Eq.~\eqref{eq}; here 
\begin{equation}\label{Rtheta}
\rt:=\begin{bmatrix}
\cos\theta&-\sin\theta&0\\
\sin\theta&\cos\theta&0\\
0&0&1
\end{bmatrix}.
\end{equation}
\end{theorem}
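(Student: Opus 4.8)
The plan is to exploit the very special form of $\rt$. Since $\rt$ takes values in $\mathcal{S}\OO(2)\oplus\langle{\bf Id}\rangle\subset\OO(3)$ (and $\det\rt\equiv1$, so it is everywhere invertible), the pointwise inverse $\FF=\WW^{-1}=\rt^{-1}$ is again $\OO(3)$-valued, being orthogonal at every point. Hence Assumption~\ref{assumption} forces $\TT(\FF)\equiv{\bf 0}$, so the second and third equations of \eqref{eq} hold automatically. Consequently $\WW=\rt$ solves \eqref{eq} if and only if it satisfies the single first-order equation $\curl\rt=-\aaa$, and the whole argument reduces to showing that this one identity already forces $\aaa={\bf 0}$, contradicting the hypothesis $\aaa\neq{\bf 0}$.

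First I would compute $\curl\rt$ row-wise from the definition in Section~\ref{notations}. Writing $\theta_i:=\na_i\theta$, a direct calculation gives
\begin{equation*}
\curl\rt=\begin{bmatrix}
\theta_3\cos\theta & -\theta_3\sin\theta & -\theta_1\cos\theta+\theta_2\sin\theta\\
\theta_3\sin\theta & \theta_3\cos\theta & -\theta_1\sin\theta-\theta_2\cos\theta\\
0 & 0 & 0
\end{bmatrix}.
\end{equation*}
Two structural features of this matrix drive the proof: its third row vanishes identically; and its upper-left $2\times2$ block equals $\theta_3$ times the planar rotation by angle $\theta$, while the first two entries of its last column equal minus that same rotation applied to the vector $(\theta_1,\theta_2)^{\top}$.

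Now suppose $\curl\rt=-\aaa$. The vanishing third row immediately gives $\aaa^3_1=\aaa^3_2=\aaa^3_3=0$. Comparing upper-left $2\times2$ blocks and taking determinants yields $\theta_3^2=\aaa^1_1\aaa^2_2-\aaa^1_2\aaa^2_1=:\delta$, a constant; since $\theta\in C^2$, the function $\theta_3$ is continuous, and $\Omega$ is connected, so if $\delta>0$ then $\theta_3$ is a nonzero constant, whence $\cos\theta$ and $\sin\theta$ are individually constant, $\theta$ is constant, and $\theta_3\equiv0$, a contradiction. Therefore $\theta_3\equiv0$, and then comparing the upper-left blocks again gives $\aaa^1_1=\aaa^1_2=\aaa^2_1=\aaa^2_2=0$. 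With $\theta=\theta(x_1,x_2)$ the last column of the identity reads $\theta_1=\aaa^1_3\cos\theta+\aaa^2_3\sin\theta$ and $\theta_2=-\aaa^1_3\sin\theta+\aaa^2_3\cos\theta$; equating the mixed second partials $\na_2\theta_1=\na_1\theta_2$ (legitimate as $\theta\in C^2$) gives $\theta_2^2=-\theta_1^2$, hence $\theta_1\equiv\theta_2\equiv0$, and substituting back (invert the rotation) forces $\aaa^1_3=\aaa^2_3=0$. Thus every entry of $\aaa$ vanishes, which is the desired contradiction.

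The argument is elementary once the reduction in the first paragraph is in place; the only steps needing care are the bookkeeping in the curl computation and the two short rigidity arguments — the connectedness argument pinning down $\theta_3$ (where both the degenerate case $\delta=0$ and the generic case $\delta>0$ must be dispatched) and the compatibility (mixed-partials) argument pinning down $(\theta_1,\theta_2)$. I do not expect simple-connectedness of $\Omega$ or any property of the Leray projector to be needed for this particular statement; those presumably enter only in the announced extension to full $\OO(3)$.
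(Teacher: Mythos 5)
Your proposal is correct, and it takes essentially the same route the paper ascribes to Acharya: use Assumption~\ref{assumption} to discard the second and third equations of Eq.~\eqref{eq} (since $\FF=\rt^{-1}$ is $\OO(3)$-valued, $\TT(\FF)\equiv{\bf 0}$), then compute $\curl\rt$ explicitly and show the resulting relations among $\sin\theta$, $\cos\theta$, and the constant entries of $\aaa$ (together with the $C^2$ equality of mixed partials) force $\aaa\equiv{\bf 0}$. The paper only sketches this computation by citing \cite{a}, so your argument is a correct, self-contained version of that same computational approach.
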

The proof in \cite{a} follows from concrete computations: with the ansatz~\eqref{Rtheta}, Eq.~\eqref{eq} reduces to a system of algebraic equations for $\sin\theta$ and $\cos\theta$ only, which is not soluble unless $\aaa \equiv 0$.

The goal of this note is to extend Acharya's Theorem~\ref{thm: a} in order to include more general form of $\WW$ and assuming lower regularity requirements. At the moment we are not able to generalise to all of $\OO(3)$-valued $\WW$; an additional structural condition is needed ---
\begin{assumption}\label{assumption'}
$\mathscr{Q}(\WW)$ is $\OO(3)$-valued ($\mathscr{Q}$ is 
the complement of Leray projector in Section~\ref{notations}).
\end{assumption}

\subsection{Mechanics}
In the terminologies of continuum mechanics, Theorem~\ref{thm: a} means that in the \emph{nonlinear} regime, there is no $C^2$-stress-free spatially uniform dislocation density field, unless such uniform dislocation density is everywhere vanishing. %It is in stark contrast to the linear case; see Section~3 in \cite{a}. 

Various dislocation distributions producing no stress have been observed in the limit of continuum elastic descriptions ({\it cf.} Mura \cite{m}, Head--Howison--Ockendon--Tighe  \cite{hhot}, Yavari--Goriely \cite{yg}, etc.). This is the background for our work. In this note, we aim to further the investigation by Acharya \cite{a} in the nonlinear regime.

\section{Main Result}
%The main theorem of the paper is the following:
\begin{theorem}\label{thm: main}
Let $\Omega$, $\WW$, $\aaa$, $\TT$, $\FF$, and $\n$ be as in Section~\ref{notations}. Under Assumptions~\ref{assumption} and \ref{assumption'}, Eq.~\eqref{eq} has no solution $\WW$ in $C^1(\Omega; \OO(3))$ unless the uniform dislocation density field $\alpha \equiv 0$.
\end{theorem}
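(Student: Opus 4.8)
The plan is to show that Assumption~\ref{assumption'} together with the first equation of \eqref{eq} forces $\WW$ itself to be $\OO(3)$-valued and constant, whence $\curl\WW = {\bf 0}$ and thus $\aaa \equiv 0$. First I would record the Hodge-type splitting $\WW = \proj(\WW) + \mathscr{Q}(\WW)$ acting row-wise, where $\proj(\WW)$ is divergence-free and $\mathscr{Q}(\WW)$ is a gradient, say $\mathscr{Q}(\WW)^i_j = \na_j \phi^i$ for some scalar potentials $\phi^i$. Since the curl annihilates gradients, we get $\curl \WW = \curl\big(\proj(\WW)\big) = -\aaa$. Because $\aaa$ is a constant matrix, the row-wise divergence-free field $\proj(\WW)$ has constant curl; one checks (e.g. by writing $\proj(\WW)^i = \na \times a^i$ for a vector potential, or directly in Fourier variables) that $\proj(\WW)$ is then, up to the addition of a further gradient, an affine field — and since it is already divergence-free, each row is of the form $\proj(\WW)^i_j = L^i_j + (\text{affine correction})$, with the antisymmetric part of the gradient prescribed by $\aaa$.

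Next I would exploit the pointwise constraint coming from Assumption~\ref{assumption'}: $\mathscr{Q}(\WW)(x) \in \OO(3)$ for every $x$, i.e. $\mathscr{Q}(\WW)^\top \mathscr{Q}(\WW) = {\bf Id}$. Differentiating this identity (valid since $\WW \in C^1$, hence $\mathscr{Q}(\WW)$ inherits enough regularity after the projection is understood appropriately on $\Omega$ — this regularity bookkeeping is one technical point to handle) shows that $\mathscr{Q}(\WW)$ has constant singular values equal to $1$; combined with the fact that its rows are gradients, a rigidity-type argument (in the spirit of Liouville's theorem for the orthogonal group, or simply that a curl-free $\OO(3)$-valued field on a simply-connected domain is locally constant) yields that $\mathscr{Q}(\WW)$ is a \emph{constant} orthogonal matrix $Q_0$. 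Hence $\WW = \proj(\WW) + Q_0$, and $\curl\WW = \curl\big(\proj(\WW)\big) = -\aaa$ still holds.

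Now I would bring in Assumption~\ref{assumption} and the remaining two equations of \eqref{eq}. Since $\WW$ is $\OO(3)$-valued by hypothesis of Theorem~\ref{thm: main}, so is $\FF = \WW^{-1} = \WW^\top$, and therefore $\TT(\FF) = {\bf 0}$ identically; the equilibrium equation and the traction-free boundary condition are then automatically satisfied and carry no information. So the whole content reduces to: \emph{can an $\OO(3)$-valued $C^1$ field $\WW$, whose $\mathscr{Q}$-part is a constant orthogonal matrix, have $\curl\WW = -\aaa$ with $\aaa \neq {\bf 0}$?} Writing $\WW = \proj(\WW) + Q_0$ and using $\WW^\top\WW = {\bf Id}$, one expands to get $\proj(\WW)^\top\proj(\WW) + Q_0^\top\proj(\WW) + \proj(\WW)^\top Q_0 = {\bf 0}$; this algebraic-differential relation, together with $\proj(\WW)$ being divergence-free with constant curl $-\aaa$, should be incompatible with $\aaa\neq{\bf 0}$. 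Concretely, I expect the constant-curl, zero-divergence field $\proj(\WW)$ to grow linearly (its symmetric gradient part is unconstrained only up to harmonic pieces, but the curl being a nonzero constant forces genuine linear growth in some direction), which contradicts the pointwise bound $|\WW(x)| = \sqrt{3}$ forced by $\WW(x)\in\OO(3)$ on the bounded domain $\Omega$ once one propagates the constraint — more carefully, $\proj(\WW) = -Q_0 + \WW$ is bounded, so $\curl\WW$ bounded and $\dv\proj(\WW)=0$ plus boundedness on a domain we can translate/scale should kill the constant curl. The main obstacle, I expect, is making this last incompatibility rigorous: the Leray projector $\proj$ is defined on $\R^3$, not on $\Omega$, so I would either extend $\WW$ suitably, or reinterpret $\proj(\WW)$ and $\mathscr{Q}(\WW)$ via the Helmholtz decomposition intrinsic to $\Omega$ (using simple-connectedness), and then argue that a bounded divergence-free field with constant nonzero curl cannot exist — a quantitative rigidity statement that is the crux of the whole argument.
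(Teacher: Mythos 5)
Your first half runs parallel to the paper's own argument: Assumption~\ref{assumption} together with $\WW\in C^1(\Omega;\OO(3))$ makes the second and third equations in \eqref{eq} vacuous, and the key observation that the rows of $\mathscr{Q}(\WW)$ are gradients which, being $\OO(3)$-valued by Assumption~\ref{assumption'}, must form a \emph{constant} orthogonal matrix is precisely the paper's claim, established there via Liouville's rigidity theorem for $C^1$ maps with orthogonal gradient. The regularity and domain issues you flag (how much smoothness $\mathscr{Q}(\WW)$ inherits, and the fact that $\proj$ is defined on $\R^3$ rather than on $\Omega$) are handled in the paper by working with the Hodge decomposition of each row on the simply-connected domain $\Omega$ and invoking elliptic regularity for the potentials, so that part of your outline is sound in substance.

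The finish, however, contains a genuine gap. The statement you identify as the crux --- that a bounded, divergence-free field with constant nonzero curl cannot exist --- is simply false on a bounded domain: the matrix field whose $i$-th row is $-\tfrac12\,\aaa^i\times x$ (with $\aaa^i$ the $i$-th row of $\aaa$) is divergence-free, has constant curl $-\aaa$, and is bounded on $\Omega$. Translating or rescaling does not help, because the field is only defined on $\Omega$ and there is no way to extend it to all of $\R^3$ while keeping the curl constant, which is where such a Liouville-type statement would hold. In other words, using the orthogonality of $\WW$ only through the pointwise bound $|\WW|=\sqrt{3}$ cannot close the argument; the constraint must be used differentially. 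This is exactly what the paper does after the claim: since $\mathscr{Q}(\WW)$ is constant, each row satisfies $\dv\WW^i=0$, and combined with $\curl\WW=-\aaa$ constant this yields $\Delta\WW^i=0$; harmonicity is then played against the structure of $\WW$ to force $\WW$ to be constant (in the paper via the Hodge-theoretic representation on the simply-connected $\Omega$; more directly one may note that $0=\Delta|\WW^i|^2=2|\na\WW^i|^2$ because each row has unit length), whence $\curl\WW=0$ and $\aaa\equiv 0$. Your proposal stops short of any step of this kind, and the incompatibility you hope to exploit does not exist at the level of ``bounded $+$ divergence-free $+$ constant curl''; the intermediate assertion that $\proj(\WW)$ must be affine is likewise unjustified as stated and, in any case, is not what is needed.
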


%Theorem~\ref{thm: a} proves that no solution for Eq.~\eqref{eq} exists in $C^2(\Omega;\mathcal{S}\OO(2) \oplus \langle{\bf Id}\rangle)$ unless $\alpha \equiv 0$, which is more restrictive than $C^2(\Omega; \OO(3))$.

Theorem~\ref{thm: main} agrees with the linear case. The following arguments are essentially taken from Section~3 in \cite{a}. When $\mathbf{U}:=\FF-{\bf Id}$ is uniformly small, set $\mathbf{C}:=D\TT({\bf I})$. The matrix field ${\bf U}$ is known as the elastic distortion, and the rank-$4$ tensor field $\mathbf{C}$ is known as the elastic modulus. Then the \emph{linearised system} for Eq.~\eqref{eq} is 
\begin{equation}\label{ linearised eq}
\begin{cases}
\curl \mathbf{U} = - \aaa \qquad \text{ in } \Omega,\\
\dv\big(\mathbf{C}\mathbf{U}\big) = 0 \qquad \text{ in } \Omega,\\
\mathbf{C}\mathbf{U} \cdot \n = 0\qquad \text{ on } \p\Omega.
\end{cases}
\end{equation}
By Kirchhoff's uniqueness theorem for linear elastostatics, the symmetric part $${\bf \e}:= \frac{{\bf U} + {\bf U}^\top}{2}$$
must be zero. Thus, if ${\bf U}$ is $\OO(3)$-valued, then Eq.~\eqref{ linearised eq} is not soluble except when $\aaa \equiv 0$. That is, $\alpha \equiv 0$ is a necessary (in fact, not sufficient in general) condition for the solubility of Eq.~\eqref{ linearised eq}. 

Also note that $\WW=\mathcal{R}_\theta$ in Theorem~\ref{thm: a} satisfies Assumption~\ref{assumption'}: direct computation in polar coordinates shows that ${\rm div}\,\mathcal{R}_\theta \equiv 0$; hence $\mathscr{Q}\WW\equiv\WW\equiv \mathcal{R}_\theta$, which is $\OO(3)$-valued.

\section{Proof}
\begin{proof}[Proof of Theorem~\ref{thm: main}]

Throughout the proof we denote by $\WW^1, \WW^2, \WW^3$ the row-vectorfields of the matrix field $\WW$. Also, let $\ta$ be the field of differential $2$-forms dual to $\aaa$, namely
\begin{align*}
\ta^i = \aaa^i_1 \,dx^2 \wedge dx^3 + \aaa^i_2 \,dx^3 \wedge dx^1 + \aaa^i_3 \,dx^1 \wedge dx^2.
\end{align*}
Thus, by Hodge duality, the first equation in Eq.~\eqref{eq} becomes
\begin{equation}\label{curl eq}
d \WW^i = -\ta^i\qquad \text{ for each } i \in \{1,2,3\},
\end{equation}
which is an identity of  $2$-forms. Here and hereafter, we identify $\WW^i$ with a $1$-form (not relabelled).

Under Assumption~\ref{assumption} the second and the third equations in Eq.~\eqref{eq} are satisfied automatically. So it remains to solve for Eq.~\eqref{curl eq} in the space of $\OO(3)$-valued matrix fields.

%Let us introduce the divergence of $\WW^i$.
 Recall that the divergence operator acting on differential $1$-forms on $\Omega \subset \R^3$ is nothing but the codifferential $d^* := \star d \star$, where $\star$ is the Hodge star operator. Also, the Laplacian equals
\begin{align}\label{laplace-beltrami}
\Delta = dd^*+d^*d.
\end{align}
Let us split $\WW$ into
\begin{equation}\label{hodge}
\WW^i = d^*\two^i + d\phi^i + c^i\qquad \text{ on } \Omega,
\end{equation}
where $\two^i$ is a field of differential $2$-form, $\phi^i$ is a scalarfield, and $c^i$ is a constant in $\R^3$. This is done by the Hodge decomposition theorem and that $\Omega$ is simply-connected; see, {\it e.g.}, Chapter 6 in \cite{wa}. In local coordinates, Eq.~\eqref{hodge} can be expressed as follows:
\begin{align*}
\WW^i_j = \sum_{k=1}^3\na_k \two^i_{kj} + \na_j \phi^i + c^i_j\qquad \text{ for each } i,j \in \{1,2,3\}.
\end{align*}
By standard elliptic regularity theory (see \cite{gt}), $\two^i$ and $\phi^i$ have $C^{1,\gamma}$-regularity for any $\gamma \in [0,1[$. 

Now we \emph{claim} that
\begin{equation}\label{claim}
\Big\{\na_j\phi^i\Big\}_{1\leq i,j \leq 3} \text{ is equal to a constant $\OO(3)$-matrix}.
\end{equation}
Indeed, since the Leray projector maps onto the divergence-free part of $\WW$, we have $\mathscr{Q}\WW^i = d\phi^i$ for $\phi^i \in C^{1,\gamma}(\Omega)$. By Assumption~\ref{assumption'} we have
\begin{equation*}
\sum_{k=1}^3 \na_k \phi^i \na_k \phi^j = \delta^{ij},
\end{equation*}
namely that $\phi$ is an isometric embedding from $\Omega \subset \R^3$ into $\R^3$. The classical rigidity theorem of Liouville~\cite{l} yields that $\phi^i$ is an affine map globally on $\Omega$ (in fact, $C^1$-regularity of $\phi^i$ suffices here). Thus the \emph{claim}~\eqref{claim} follows.

To conclude the proof, taking $d^*$ to both sides of Eq.~\eqref{hodge} and noting the \emph{claim}~\eqref{claim}, we get
\begin{equation*}
d^*\WW^i = 0.
\end{equation*}
This together with Eqs.~\eqref{curl eq} and \eqref{laplace-beltrami} implies that
\begin{align}\label{harmonic}
\Delta \WW^i=0.
\end{align}
That is, $\WW^i$ is a harmonic $1$-form for each $i\in\{1,2,3\}$. Eq.~\eqref{harmonic} is understood in the sense of distributions; nevertheless, by Weyl's lemma (see \cite{gt}) $W^i$ is automatically $C^\infty$. In view again of the Hodge theory (see Chapter 6 in \cite{w}), it is represented by generators of the first cohomology group. But $\Omega$ is simply-connected, so there is no non-trivial such generator. Thus $\WW^i$ is  constant. Therefore, we infer from Eq.~\eqref{curl eq} that $\alpha^i$ equals zero. The proof is complete. \end{proof}

\section{Remarks}
%The ansatz~\eqref{Rtheta} in Theorem~\ref{thm: a} satisfies $\dv\rt\equiv 0$. Thus $\DD=0$ in the proof of Theorem~\ref{thm: main}; see Eq.~\eqref{divergence} therein. So we obtain an alternative proof of Theorem~\ref{thm: a} by noting that $\Delta\WW=0$ in Eq.~\eqref{laplace eq}.

It would be interesting to consider the same problem for $\Omega$ being a $3$-dimensional manifold, which falls into the framework of incompatible (non-Euclidean) elasticity. %Also, the case of lower regularity (namely, $\FF$ or $\WW$ is non-$C^2$) calls for further investigation.

The mechanical problem considered in this paper may have deep underlying geometrical connotations. In particular, it is related to  constructions for coframes with prescribed (closed) differential. See Bryant--Clelland \cite{bc} for analyses via exterior differential systems.

\bigskip
\noindent
{\bf Acknowledgement}. The author is deeply indebted to Amit Acharya for kind communications and insightful discussions. We also thank Janusz Ginster for pointing out a fallible argument in an  earlier version of the draft.

\end{document}